\newtheorem{definition}{Definition}
\newtheorem{proposition}{Proposition}
\newtheorem{remark}{Remark}
\newcommand{\R}{\mathbb{R}}
\newcommand{\nstates}{{n_x}}
\newcommand{\ninputs}{{n_u}}
\begin{document}
%
\title{
Interpreting and Improving Optimal Control Problems with Directional Corrections
}

\author{Trevor Barron and Xiaojing Zhang
\thanks{© 2025 IEEE.  Personal use of this material is permitted.  Permission from IEEE must be obtained for all other uses, in any current or future media, including reprinting/republishing this material for advertising or promotional purposes, creating new collective works, for resale or redistribution to servers or lists, or reuse of any copyrighted component of this work in other works.}
\thanks{Manuscript received: November 20, 2024; Revised February 18, 2025; Accepted March 20, 2025.
This paper was recommended for publication by Editor Lucia Pallottino upon evaluation of the Associate Editor and Reviewers' comments.
{\it  (Trevor Barron and Xiaojing Zhang contributed equally to this work.) (Corresponding authors: T. Barron; X. Zhang.)}}
\thanks{The authors are with Apple Inc., Cupertino, CA 95014, USA.
{\tt\footnotesize\ \{trevor\_barron, xiaojing\_zhang2\}@apple.com}.}
\thanks{Digital Object Identifier (DOI): 10.1109/LRA.2025.3557226}
}

\markboth{IEEE Robotics and Automation Letters. Preprint Version. Accepted March, 2025}
{Barron \MakeLowercase{\textit{et al.}}: Interpreting and Improving Optimal Control Problems with Directional Corrections} 

\maketitle

\begin{abstract}
Many robotics tasks, such as path planning or trajectory optimization, are formulated as optimal control problems (OCPs). 
The key to obtaining high performance lies in the design of the OCP's objective function.
In practice, the objective function consists of a set of individual components that must be carefully modeled and traded off such that the OCP has the desired solution.
It is often challenging to balance multiple components to achieve the desired solution and to understand, when the solution is undesired, the impact of individual cost components.
In this paper, we present a framework addressing these challenges based on the concept of \emph{directional corrections}. 
Specifically, given the solution to an OCP that is deemed undesirable, and access to an expert providing the direction of change that would increase the desirability of the solution, our method analyzes the individual cost components for their ``consistency" with the provided directional correction.
This information can be used to improve the OCP formulation, e.g., by increasing the weight of consistent cost components, or reducing the weight of -- or even redesigning -- inconsistent cost components.
We also show that our framework can automatically tune parameters of the OCP to achieve consistency with a set of corrections.
\end{abstract}

\begin{IEEEkeywords}
Optimization and Optimal Control,
Motion and Path Planning, 
Machine Learning for Robot Control, 
Autonomous Agents
\end{IEEEkeywords}

\section{Introduction}

%

\IEEEPARstart{A}{n} ever-increasing number of tasks in robotics and controls are formulated as optimal control problems (OCPs), where a robot's desired behavior is characterized as the solution of an OCP. Examples include motion and path planning \cite{lavalle2006planning, murray2009optimization, padenFrazzoli2016survey, Zhang2021_OBCA}, building climate control \cite{oldewurtel2012buildingClimateControl, ZhangECC2013_BuildingControl, maddalena2020data} or  battery management \cite{moura2010stochastic, chaturvedi2010algorithms, guanetti2017optimal}.
Robotic tasks are often executed in a model predictive control (MPC) or receding-horizon fashion, where a finite-horizon OCP is solved at each timestep. 
Upon execution of the first control input, the system transitions to a new state at which a new OCP is solved \cite{Borrelli_Bemporad_Morari_MPC_book_2017, Mayne_nonlinearMPC2000}.
This cycle repeats until the robot has completed its task, e.g., arrived at its goal.

A typical OCP can be divided into two parts: a system model with its operational domain, which enters the OCP as constraints, and an objective function that the robot optimizes for while satisfying the constraints.
When the system is not operating near its limits, a robot's behavior is determined by the objective function and the system model. 
In practice, the objective function is often formulated as a weighted sum of individual components that are traded off against each other
\cite{lavalle2006planning, murray2009optimization, padenFrazzoli2016survey}.
While an OCP's objective function and system model concisely define the robot's optimal behavior, interpreting the contribution of individual components to the optimal behavior is often challenging in practice, as the objective function may contain dozens, possibly even hundreds, of components.
The interplay between many cost components and the system model makes analysis of -- and improvements to -- the OCP a painstaking task that involves significant trial and error. 

Consider, for example, the case where a robot's plan is defined as the (locally) optimal solution to an OCP.
Suppose an oracle (e.g., a human expert) observes that the plan is undesirable, and provides a direction of desired change in the solution \cite{jin2022learning}.
An example \emph{directional correction} could be ``the robot should move slower", or ``the robot should stay further to the left". 
To understand the cause of the undesired behavior, and to translate those directional corrections into improvements in the OCP, practitioners need insights into \emph{(i)} Which cost components are consistent or inconsistent with a directional correction? \emph{(ii)} How can one adjust the weights on those cost components to improve the plan in the direction defined by the correction? 
Surprisingly, these questions turn out to be difficult to answer. 

While it is common in practice to associate the impact of each cost component with their magnitudes (e.g., cost components that show high values are often singled out as root causes for undesired behavior), this approach often leads to misleading conclusions, as we will demonstrate in Section~\ref{sec:experiment} of this paper. 
Furthermore, due to system dynamics, it is possible that an undesired behavior at a certain stage of the plan (e.g., a slow-down at 1 second) is caused by a cost at an entirely different stage (e.g., one associated with lateral movement five seconds ahead). 
These cross-stage effects can be counter-intuitive and challenging to understand.

In this paper, we present an OCP analysis framework that systematically addresses the following question: Given a plan, which cost components and constraints are in conflict (inconsistent) or agreement (consistent) with the desired change defined by a \emph{directional correction}?
Based on this framework, we also demonstrate a method that finds parameters of the OCP to maximize consistency with the corrections.
Our contributions are as follows:
\begin{itemize}
    \item We introduce the notion of \emph{consistency} and present a framework that analyzes the consistency of each cost component with respect to a directional correction. 
    We extend the work of \cite{jin2022learning}, which {focuses on} corrections with respect to control inputs, to allow corrections formulated in terms of states. This significantly increases the applicability of the method, since corrections are often easier to provide on states such as position, speed or acceleration, rather than control inputs such as jerk. 
    Our framework naturally considers the cross-stage effects of cost functions, and we demonstrate that it provides a more accurate representation of the influence of a cost component than the traditional cost magnitude analysis.
    We also present an extension to reason about the consistency of inequality constraints with respect to a directional correction.

    \item We show that our framework applies to both open-loop and closed-loop settings, where the latter is relevant in a receding-horizon MPC setup.

    \item We demonstrate how to use the framework to automatically tune cost component weights in an OCP, assuming access to an oracle that provides a dataset of corrections.


\end{itemize}

{The impact of cost components on the solution of an OCP has also been studied through the lens of sensitivity analysis \cite{boyd2004convex, gal2012advances} and parametric optimization \cite{bemporad2002explicit, borrelli2017predictive} by characterizing the optimal plan as a function of the weights in the cost function. 
Parametric optimization has mostly focused on linear systems with convex costs due to the complexity of computing, let alone explicitly representing, the optimal plan for non-linear systems.
In contrast, our approach is straightforward to apply to non-linear systems with non-convex cost functions since we do not require an explicit representation of the optimal plan with respect to parameters.}

\section{Problem Setup}
We assume that the robot's dynamics are described by 
\begin{equation}\label{eq:dynamics}
    x_{k+1} = f(x_k, u_k),
\end{equation}
where $x_k \in \R^{\nstates}$ is the robot's state at timestep $k$ given an initial state $x_0 = x_\text{init}$, $u_k \in \R^{\ninputs}$  is the control input, and $f: \R^{\nstates} \times \R^{\ninputs} \to \R^{\nstates}$ describes the robot's dynamics. The system is subject to input and state constraints of the form $h(x_k, u_k) \leq 0$, where $h: \R^{\nstates} \times \R^{\ninputs} \to \R^{n_h}$, $n_h$ is the number of constraints, and the inequality constraint is interpreted elementwise. The planning problem amounts to finding a sequence of states $x_{0:N}:=(x_0,x_1,\ldots, x_N)$ and inputs $ u_{0:N-1}:= (u_0,u_1,\ldots,u_{N-1})$, over a horizon $N$, that optimizes a given objective function
\begin{subequations}
    \begin{equation}\label{eq:cost}
        J(x_{0:N}, u_{0:N-1}) = \ell_N(x_N) + \sum_{k=0}^{N-1} \ell_k(x_k,u_k),
    \end{equation}
where $\ell_k:\R^\nstates \times \R^\ninputs \to \mathbb \R$ are differentiable stage costs, and $\ell_N:\R^\nstates \to \R$ is the terminal cost. We assume that the stage cost $\ell_k(\cdot,\cdot)$ is the weighted sum of $R$ individual cost components $\ell_k^{(r)}:\R^\nstates \times \R^\ninputs \to \mathbb \R$, such that
    \begin{equation}\label{eq:cost_components}
        \ell_k(x_k, u_k) = \sum_{r = 0}^R w_k^{(r)} \ell_k^{(r)}(x_k, u_k),\\
    \end{equation}
\end{subequations}
where $w_k^{(r)}>0$ is the weight. 
Combining \eqref{eq:dynamics} -- \eqref{eq:cost_components}, the OCP can be formulated as
\begin{align}\label{eq:OCP}
\min_{x_{0:N}, u_{0:N-1}} \quad & J(x_{0:N}, u_{0:N-1}) \nonumber \\
    \text{s.t.} \quad   &x_0 = x_{\text{init}} \\
                        &x_{k+1} = f(x_k, u_k), \quad k=0,1,\ldots,N - 1, \nonumber 
\end{align}
where, for ease of exposition, we omit the constraints $h(x_k,u_k) \leq 0$. Throughout the paper, we refer to $\zeta^\star = (x^\star_{0:N}, u^\star_{0:N-1})$ as the optimal plan. Furthermore, any $\zeta = (x_{0:N}, u_{0:N-1})$ satisfying the dynamics \eqref{eq:dynamics} is a plan.


\section{Consistency Analysis Framework}\label{sec:consistency_analysis_framework}
We begin by recalling the notion of \emph{directional corrections} from \cite{jin2022learning} in Section~\ref{sec:directional_corrections}. In Section~\ref{sec:consistency}, we formally define the notion of \emph{consistency}, for which we provide a technical analysis in Section~\ref{sec:consistency_analysis}. Extensions are discussed in Sections~\ref{sec:consistency_extension_closed_loop} and \ref{sec:consistency_extension_constraints}.

\subsection{Directional Corrections}\label{sec:directional_corrections}
Given a plan $\zeta = (x_{0:N}, u_{0:N-1})$,  we define the vector $a := (a_x,a_u) \in \R^{(N + 1)\nstates + N\ninputs}$ to be a \emph{directional correction} representing the direction in which we desire the plan to move. 
For example, if our system is a one-dimensional triple integrator with state $x = (\text{position}, \text{velocity}, \text{acceleration})$, input $u = \text{jerk}$, and horizon $N=1$, then the directional correction $((0,0,0),(0,0,1), (0))$ for the plan $\zeta=(x_0,x_1,u_0)$, implies that the robot's plan should have had higher acceleration at stage~1. 
The directional correction indicates only that the plan should have higher acceleration, but it does not quantify by how much.

Directional corrections may be obtained in different ways. For example, one might introspect a plan, detect undesirable behavior, and annotate a desired change.
Alternatively, a directional correction could be derived automatically from a set of system requirements, desired outcomes, or expert demonstrations. We refer the interested reader to \cite{jin2022learning} for detailed discussion on directional corrections.

\subsection{Consistency}\label{sec:consistency}
Let $\zeta = ( x_{0:N}, u_{0:N-1})$ be a plan satisfying the system dynamics \eqref{eq:dynamics}, and let $A_k := \frac{\partial f}{\partial x_k}(x_k,u_k)$ and $B_k := \frac{\partial f}{\partial u_k}(x_k,u_k)$ be the linearization of \eqref{eq:dynamics} about $\zeta$.
Let $F_{x,u} \in \R^{(N+1)\nstates \times N\ninputs}$ be the sensitivity matrix\footnote{$F_{x,u}$ can also be understood as the total derivative of $x_{0:N}$ with respect to $u_{0:N-1}$, under the system dynamics \eqref{eq:dynamics}.}
 \begin{equation*}
     F_{x,u} = 
     \begin{bmatrix} 
    0 & 0 & \cdots & 0  \\
    B_0 & 0 & \cdots & 0\\
    A_1 B_0 & B_1 & \cdots & 0\\
    A_2 A_1 B_0 & A_2 B_1 & \cdots & \vdots \\
    \vdots & \vdots & \ddots & 0\\
    A_{N-1} \cdots A_1 B_0 & A_{N-1} \cdots A_2 B_1 & \hdots & B_{N-1}
     \end{bmatrix}
 \end{equation*}
that maps a (small) change in control inputs $\Delta u_{0:N-1}$ to the change in the states $\Delta x_{0:N}$, i.e.,
\begin{equation}\label{eq:sensitivity}
    \Delta x_{0:N} \approx F_{x,u} \Delta u_{0:N-1}.
\end{equation}

We are now interested in deriving the effect of a given cost component $\ell_k^{(r)}(x_k, u_k)$ around $\zeta$.
To streamline the upcoming discussion, we re-express the cost components $\ell_k^{(r)}(x_k, u_k)$ in their eliminated forms $\tilde\ell_k^{(r)}(u_{0:N-1})$, obtained by eliminating the state variable $x_k$ using the system dynamics \eqref{eq:dynamics}.
This variable elimination allows us to reason explicitly about the effect of each control input on the cost $\ell_k^{(r)}(\cdot)$, without needing to consider the dynamics~\eqref{eq:dynamics}.

Let $\nabla \tilde\ell_k^{(r)}(\cdot) \in \R^{N \ninputs}$ denote the gradient of $\tilde\ell_k^{(r)}(\cdot)$. 
Define $F := \begin{bmatrix}F_{x,u}; I_{N\ninputs}\end{bmatrix} \in \R^{((N+1)\nstates+N\ninputs)\times N\ninputs}$ to be the matrix obtained by (vertically) stacking $F_{x,u}$ with the identity matrix $I_{N\ninputs} \in \R^{N\ninputs \times N\ninputs}$. 
We are now in place to introduce the concept of \emph{consistency score}.

\begin{definition}[Consistency Score]\label{def:consistency_score}
    Let $\zeta = (x_{0:N}, u_{0:N-1})$ be a plan, $a = (a_x, a_u)$ be a directional correction, and $F$ be as defined above. Then, the \emph{consistency score} of the eliminated cost $\tilde\ell_k^{(r)}(u_{0:N-1})$ is defined as $\text{cs}_k^{(r)} := \left<a,    -w_{k}^{(r)} F \nabla\tilde\ell_{k}^{(r)}(u_{0:N-1})\right>$.
\end{definition}
This allows us to define the notion of \emph{consistency}.
\begin{definition}[Consistency]\label{def:consistency}
    An eliminated cost $\tilde\ell_k^{(r)}(u_{0:N-1})$ is called \emph{consistent} if $\text{cs}_k^{(r)} > 0$, and \emph{inconsistent} if $\text{cs}_k^{(r)} < 0$.
\end{definition}

In general, consistency is a local property defined about a trajectory $\zeta = (x_{0:N}, u_{0:N-1})$. 
Roughly speaking, a cost component is consistent if, on its own and in the absence of any other costs, it would locally move the trajectory $\zeta = (x_{0:N}, u_{0:N-1})$ in the direction of $a = (a_x, a_u)$. To see it, we observe that $-\nabla \tilde{\ell}_k^{(r)}(u_{0:N-1})$ is the direction of steepest descent of $\tilde{\ell}_k^{(r)}(\cdot)$ at $u_{0:N-1}$. 
By virtue of \eqref{eq:sensitivity}, its effect on the state trajectory is $-F_{x,u}\nabla \tilde{\ell}_k^{(r)}(u_{0:N-1})$, while its effect on the plan is given by $-F\nabla \tilde{\ell}_k^{(r)}(u_{0:N-1})$. 
It now follows immediately from Definition~\ref{def:consistency} that, for a consistent cost, $-F\nabla \tilde{\ell}_k^{(r)}(u_{0:N-1})$ is in the same direction of $a$, thus pushing the plan in the direction of $a$.


\begin{remark}
    We point out that the stage cost, $\ell_k^{(r)}(x_k, u_k)$, while defined on the stage variables $(x_k, u_k)$, affects
    both earlier and later stages through the dynamics \eqref{eq:dynamics}, as $x_k$ depends on $u_{0:k-1}$ and $x_{k+1:N}$ depends on $u_k$.
    Therefore, minimizing $\ell_k^{(r)}(x_k, u_k)$ affects the entire plan $\zeta=(x_{0:N}, u_{0:N-1})$.
    This cross-stage coupling means an undesirable behavior at stage $k$ can be caused by a cost component at a different stage\footnote{This argument can be alternatively understood as a forward-backward pass deployed in DDP/iLQR-type of algorithms.}.
\end{remark}

\subsection{Properties of (In)consistent Costs}\label{sec:consistency_analysis}
In this section, we will formally show for linear systems that a consistent cost component enjoys the intuitive property that, if the corresponding weight is increased, then the plan will improve by moving in the direction defined by the correction.
Specifically, let $\zeta^\star = (x_{0:N}^\star, u_{0:N-1}^\star)$ be the optimal plan of the OCP~\eqref{eq:OCP}, $a=(a_x,a_u)$ be a desired directional correction, $\tilde\ell_k^{(r)}(u_{0:N-1}^\star)$ be a consistent cost component, and $w_k^{(r)}>0$ its associated weight. Consider now the modified cost function, obtained by increasing the weight by $\delta>0$: 
\begin{equation}\label{eq:new_cost_function}
    \hat J(u_{0:N-1}) :=  \tilde J(u_{0:N-1}) + \delta \tilde\ell_k^{(r)}(u_{0:N-1}).
\end{equation}

\begin{proposition}\label{prop:consistency_localImprovement}
Assume our system is linear, i.e., $x_{k+1} = A_k x_k + B_k u_k$. Then there exists a plan $\hat \zeta = (\hat x_{0:N}, \hat u_{0:N-1})$ that improves upon $\zeta^\star = (x_{0:N}^\star, u_{0:N-1}^\star) $ in terms of $\hat J(\cdot)$, that moves in the direction of the correction $a$ with respect to $\zeta^\star$.
\end{proposition}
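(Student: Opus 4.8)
The plan is to exploit the first-order optimality of $\zeta^\star$ and to exhibit the desired $\hat\zeta$ explicitly as a single gradient-descent step on the perturbed cost $\hat J$. Since the system is linear, eliminating the states turns the OCP~\eqref{eq:OCP} into an \emph{unconstrained} minimization of $\tilde J(u_{0:N-1})$ over the inputs, so optimality of $\zeta^\star$ is captured by the stationarity condition $\nabla\tilde J(u_{0:N-1}^\star)=0$. Differentiating \eqref{eq:new_cost_function} then gives $\nabla\hat J(u_{0:N-1}^\star)=\delta\,\nabla\tilde\ell_k^{(r)}(u_{0:N-1}^\star)$; that is, after reweighting, the only residual gradient of the cost at $\zeta^\star$ is the (scaled) gradient of the reweighted component. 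Write $u^\star:=u_{0:N-1}^\star$ for brevity.

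Next I would define the candidate update $\hat u := u^\star-\eta\,\nabla\tilde\ell_k^{(r)}(u^\star)$ for a step size $\eta>0$ to be fixed small later, and let $\hat x_{0:N}$ be the state trajectory obtained by propagating $\hat u$ through the linear dynamics. Linearity is what makes this clean: the sensitivity relation \eqref{eq:sensitivity} holds \emph{exactly} (not merely to first order), so the full-plan perturbation is exactly $\Delta\zeta = F\,\Delta u = -\eta\,F\,\nabla\tilde\ell_k^{(r)}(u^\star)$, and $\hat\zeta=(\hat x_{0:N},\hat u)$ is a feasible plan by construction.

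I would then verify the two claimed properties for this single $\hat\zeta$. For the directional claim, I compute $\langle a,\Delta\zeta\rangle = \eta\,\langle a,\,-F\,\nabla\tilde\ell_k^{(r)}(u^\star)\rangle = (\eta/w_k^{(r)})\,\text{cs}_k^{(r)} > 0$, where positivity follows directly from consistency ($\text{cs}_k^{(r)}>0$, Definition~\ref{def:consistency}); hence $\hat\zeta$ moves along $a$. Consistency also guarantees $\nabla\tilde\ell_k^{(r)}(u^\star)\neq 0$: since $F$ contains the identity block $I_{N\ninputs}$ it is injective, so a nonzero consistency score precludes a vanishing gradient. For the improvement claim, a first-order Taylor expansion of $\hat J$ about $u^\star$ yields $\hat J(\hat u)-\hat J(u^\star) = -\eta\delta\,\|\nabla\tilde\ell_k^{(r)}(u^\star)\|^2 + o(\eta)$, which is strictly negative for all sufficiently small $\eta>0$; thus $\hat\zeta$ improves upon $\zeta^\star$ in terms of $\hat J$.

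The main obstacle is not any single estimate but keeping the two requirements compatible for one common update. The directional inequality holds for \emph{every} $\eta>0$ because $\langle a,\Delta\zeta\rangle$ scales linearly in $\eta$, whereas the descent inequality only holds for \emph{small} $\eta$; the argument therefore hinges on the fact that the leading directional term is $O(\eta)$ while the higher-order cost terms are $o(\eta)$, so a single sufficiently small $\eta$ secures both. I would close by fixing such an $\eta$, noting that linearity is precisely what renders $\Delta\zeta=F\,\Delta u$ exact (letting me state the directional conclusion without approximation) and simultaneously guarantees $\hat\zeta$ satisfies the dynamics exactly.
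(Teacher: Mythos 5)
Your proposal is correct and follows essentially the same route as the paper's own proof: a single gradient-descent step $\hat u = u^\star - \eta\,\nabla\tilde\ell_k^{(r)}(u^\star)$, with first-order optimality giving $\nabla\hat J(u^\star)=\delta\,\nabla\tilde\ell_k^{(r)}(u^\star)$, exact linearity making $\Delta\zeta = F\,\Delta u$, and Definition~\ref{def:consistency} yielding both $\nabla\tilde\ell_k^{(r)}(u^\star)\neq 0$ and $\langle a,\Delta\zeta\rangle>0$. Your write-up is in fact slightly more careful than the paper's on two points it leaves implicit --- that a single small step size serves both the descent and directional requirements simultaneously, and why consistency forces the gradient to be nonzero (though the simplest argument there is just that a vanishing gradient would make $\text{cs}_k^{(r)}=0$, no injectivity of $F$ needed).
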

\begin{proof}
    Consider the following candidate solution $\hat u_{0:N-1} := u_{0:N-1}^\star - \epsilon\delta\nabla\tilde\ell_k^{(r)}(u_{0:N-1}^\star)$. We will now demonstrate that $ - \epsilon\delta\nabla\tilde\ell_k^{(r)}(u_{0:N-1}^\star)$ is a descent direction by showing that $\delta\nabla\tilde\ell_k^{(r)}(u_{0:N-1}^\star)$ is the gradient of $\hat J(u_{0:N-1}^\star)$, and hence $\hat J(\hat u_{0:N-1}) < \hat J(u_{0:N-1}^\star)$, for some $\epsilon>0$. 
    It follows from the first-order optimality condition that $\nabla \tilde J(u_{0:N-1}^\star)=0$, such that $\nabla \hat J(u_{0:N-1}^\star) = \nabla \tilde J(u_{0:N-1}^\star) + \delta\nabla\tilde \ell_k^{(r)}(u_{0:N-1}^\star) = \delta \nabla \tilde \ell_k^{(r)}(u_{0:N-1}^\star)$.
    Furthermore, it follows from Definition~\ref{def:consistency} that $\nabla\tilde\ell_k^{(r)} \neq 0$, and hence $\hat J(\hat u_{0:N-1}) < \hat J(u_{0:N-1}^\star)$, for a sufficiently small $\epsilon>0$. To see that the new control trajectory $\hat u_{0:N-1}$ indeed moves the plan $\hat\zeta$ in the direction of the correction with respect to $\zeta^\star$, we observe from \eqref{eq:sensitivity}, which now holds with equality due to linearity assumption, that the change in control trajectory $\Delta u_{0:N-1} = -\epsilon\delta\nabla \tilde\ell_k^{(r)}(u_{0:N-1}^\star)$  changes the state trajectory as $\Delta x_{0:N} = -\epsilon\delta F_{x,u} \nabla \tilde\ell_k^{(r)}(u_{0:N-1}^\star)$, and thus the plan by $\Delta\zeta = (\Delta x_{0:N}, \Delta u_{0:N-1})$. Therefore, $\langle  a, \hat{\zeta} - \zeta^\star \rangle = \langle a, \Delta\zeta \rangle = \langle a, -\epsilon\delta {F}\, \nabla \tilde{\ell}_k^{(r)}(u_{0:N-1})\rangle > 0$, where the inequality follows from Definition~\ref{def:consistency}.
\end{proof}
Notice that Proposition~\ref{prop:consistency_localImprovement} does not establish that the optimal plan under new cost function \eqref{eq:new_cost_function}, $\hat{\zeta}^\star$, improves upon $\zeta^\star$ in the direction of $a$, i.e., it does not establish that $\langle a, \hat{\zeta}^\star - \zeta^\star   \rangle >0$. However, we have empirically observed that this is the case in practice. Furthermore, while the proof is given for linear systems, the same properties have been observed to hold in practice for nonlinear systems. We close the section by remarking that the reverse property can be shown for inconsistent cost components. 

\subsection{Extension to Closed-Loop Analysis}\label{sec:consistency_extension_closed_loop}

In practice, many systems employ an MPC scheme where they iteratively replan to respond to updates in the environment \cite{Borrelli_Bemporad_Morari_MPC_book_2017, Mayne_nonlinearMPC2000}. 
In this scheme, a system computes a motion plan at each step, but only executes the initial control input at each execution cycle.
We describe an extension to the previously discussed ``open-loop" case that permits analysis of an OCP using directional corrections expressed with respect to the closed-loop trajectory. 

For this section, we denote by $x_{k|t}$, $k\geq0$, the state at time $t+k$, predicted at time $t$, and by $u_{k|t}$ the control input at time $t+k$ predicted at time $t$. 
The true state at time $t$ is $x_t$.
In MPC, at time $t$, the controller only applies the first control input $u_{0|t}$ of the entire input sequence $u_{0:N-1|t} = (u_{0|t},\ldots,u_{N-1|t})$, while the remainder of the input sequence $u_{1:N-1|t}$ is discarded. 
At the next timestep, a new input sequence is computed $u_{0:N-1|t+1}$ based on new state measurement $x_{t+1}$, and again only the first control input $u_{0|t+1}$ is applied.
Accordingly, let $x_{0|0:T} := (x_{0|0}, \ldots, x_{0|T})$ denote a robot's closed-loop trajectory, obtained by applying the closed-loop input trajectory $u_{0|0:T-1} := (u_{0|0}, \ldots, u_{0|T-1})$. 
It is easy to verify that the closed-loop sensitivity matrix $F_{x,u}^\text{cl}$ takes the same form as $F_{x,u}$ described in Section~\ref{sec:consistency}, with $T$ (the duration of the closed-loop execution) in place of $N$ (the prediction horizon used when computing the open-loop plans). 

To evaluate the effect of a cost component $\tilde\ell_{k|t}^{(r)}(u_{0:N-1|t})$ on the closed-loop trajectory, let us consider its gradient with respect to only the first input $u_{0|t}$, which we denote by $\nabla_{u_{0|t}}\tilde\ell_{k|t}^{(r)}(u_{0:N-1|t})$.
Assuming we are interested in evaluating the overall effect of the $r$th cost component on the $k$th predicted stage, then we can collect the gradients from each motion plan as
$\Delta u_{0:T-1}^{r} := \begin{bmatrix*}\nabla_{u_{0|0}} \tilde{\ell}_{k|0}^{(r)}(u_{0:N-1|0}); \ldots; \nabla_{u_{0|T-1}} \tilde{\ell}_{k|T-1}^{(r)}(u_{0:N-1|T-1})\end{bmatrix*} \in \R^{T\ninputs}$.
Then we can estimate that cost component's local effect on the closed-loop state trajectory as $\Delta x_{0:T} \approx F_{x,u}^\text{cl} \Delta u_{0:T-1}$. 
Similar as in Section~\ref{sec:consistency}, we may now define $F^\text{cl} := [F_{x,u}^\text{cl} ; I_{T\ninputs}]$, and define \emph{closed-loop consistency score} 
$\text{cs}_k^{(r)} := \left<a,    -w_{k}^{(r)} F^\text{cl} \Delta u_{0:T-1}^{r} \right>$.
An example is given in Section~\ref{sec:experiment}.

\subsection{Extension to Constrained OCPs}\label{sec:consistency_extension_constraints}
Our discussion so far has centered around the unconstrained OCP \eqref{eq:OCP}. However, real-world systems may be subject to stage-wise constraints of the form $h_k^{(i)}(x_k,u_k) \leq 0$, where $h_k^{(i)}: \R^{\nstates} \times \R^{\ninputs} \to \R$, $i \in \{0,1,\ldots,I\}$, and $I$ is the number of constraints. In such a setup, the question of interest is: Given a desired correction $a$, would relaxing the $i$th constraint at stage $k$ improve the solution in the direction defined by the correction?

To reason about the effect of constraints, we propose an analysis framework similar to what is described above, but with an additional pre-processing step. Specifically, we begin by verifying whether the $i$th constraint is active, i.e., if $h_k^{(i)}(x_k,u_k) = 0$. If so, then one may re-express $h_k^{(i)}(x_k,u_k)$ in terms of control inputs only $\tilde h_k^{(i)}(u_{0:N-1})$ and compute the gradient $\nabla \tilde{h}_k^{(i)}(u_{0:N-1})$. If $\langle a, -F \nabla \tilde{h}_k^{(i)}(u_{0:N-1})   \rangle < 0$, then the $i$th constraint at stage $k$ is (locally) preventing the solution from improving in the desired correction direction. 

We point out that in general non-linear optimization, it is possible to construct examples where changing inactive constraints changes the solution. In our experience, however, given that many numerical solvers operate locally, we have observed that verifying activeness of a constraint works well.

\section{Applications of consistency scores}\label{sec:applications}

Next, we describe two applications of the consistency score from Definition~\ref{def:consistency_score}.
First, we show how one can use the concept of consistency to understand the solution of an OCP using directional corrections.
Then, we describe a method to automatically adjust parameters of the OCP to increase consistency with a set of directional corrections.

\subsection{OCP Consistency Analysis}\label{sec:consistency_analysis_application}

In practice, one can use consistency scores to understand which cost component(s) have the largest influence on an undesired behavior by providing a directional correction $a$ and then looking at the cost components with the lowest consistency scores.
We propose the following workflow: Suppose we are given an optimal, but undesirable, plan $\zeta = (x, u)$. Upon identifying the stage and direction in which we desire the plan to move, we construct the directional correction $a$ and compute the consistency scores as in Definition~\ref{def:consistency_score}.
A practitioner can then use the consistency scores to make useful inferences.
For example, $\sum_{k=0}^N \text{cs}_k^{(r)}$ gives the cumulative consistency of cost component $r$ over all stages with the provided correction.
Various aggregations allow for specific analysis, such as sorting the cost components by their consistency with a directional correction. 
Examples are provided in Section~\ref{sec:experiment}.

\subsection{OCP Parameter Optimization Algorithm}\label{sec:parameter_optimization}
Building on the notion of consistency, we now present a framework for optimally choosing weights $w_k^{(r)}$ of the cost components \eqref{eq:cost_components} such that the resulting cost function \eqref{eq:cost} is ``maximally" consistent with a set of plans (or closed-loop trajectories) and directional corrections.
This avoids tedious manual tuning of the weights, instead relying on a set of plans and directional corrections, $\{(\zeta_j, a_j)\}_{j\in J}$, from human annotations or generated automatically from requirements. 
Consider now the following {weight}-learning problem.
\begin{align}\label{eq:parameter_optimization}
\min_w \quad & \sum_{j\in J} \left[ \sum_{k,r} \max\left(m+\langle a_j, w_{k}^{(r)} F(\zeta_j)\nabla \tilde\ell_{k}^{(r)}(\zeta_j)   \rangle, 0 \right) \right] \nonumber \\
    \text{s.t.} \quad   &w \geq 0, \\
                        & \mathbf{1}^\top w = N+1, \nonumber \\
                        &w_{k+1}^{(r)} \leq w_k^{(r)}, \quad k=0,1,\ldots,N-1, \nonumber 
\end{align}
Similar to Section~\ref{sec:consistency}, let $F(\zeta_j)$ be the sensitivity matrix obtained by linearizing the dynamics about the $j$th trajectory and $\nabla \tilde\ell_{k}^{(r)}(\zeta_j)$ the gradient of the $r$th cost component at stage $k$ evaluated at $\zeta_j$.

Intuitively, for a single sample $(\zeta_j, a_j)$, the term inside the $\max(\cdot, 0)$ represents the \textit{inconsistency} score of the $r$th component at the $k$th stage with respect to the correction.
{The parameter $m > 0$ is the \emph{consistency margin}, and encourages strictly consistent plans satisfying
$\langle a_j, -w_{k}^{(r)} F(\zeta_j) \nabla \tilde\ell_{k}^{(r)}(\zeta_j) \rangle \geq m$.
As directional corrections do not indicate a desired magnitude of change, a small value of $m$ has been empirically observed to be sufficient.
Similarly, we minimize $\max(\cdot, 0)$ rather than the sum of inconsistency scores, which would aim to also increase the magnitude of the consistency score.}

The constraint $\mathbf{1}^\top w = N+1$ ensures that the weight vector remains bounded, since any positive multiple of $w$ results in the same optimizer.
Finally, we use the constraint $w_{k+1}^{(r)} \leq w_k^{(r)}$, similar to a discount factor in reinforcement learning, to avoid placing high importance on long-term actions over short-term ones.
Problem~\eqref{eq:parameter_optimization} is LP-representable and therefore can be solved efficiently even if the number of training data is large.
Different from other work \cite{jin2022learning} that uses corrections to optimize the weights, we penalize inconsistencies but do not prohibit them, since the presence of dynamics may not make every correction achievable.


Given an initial guess for the weights, $w^0$, the parameter-learning process alternates between data collection and parameter updates until convergence.
Intuitively, we build a set of trajectories for which we desire improvement and perform a batch optimization at each iteration to find $w$ that best satisfies the corrections on \emph{all} the trajectories.
A concise description is presented in Algorithm~\ref{alg:parameter_optimization}.


 \begin{algorithm}[tb]
 \caption{OCP parameter optimization algorithm}
 \label{alg:parameter_optimization}
 \begin{algorithmic}[1]
 \renewcommand{\algorithmicrequire}{\textbf{Input:}}
 \renewcommand{\algorithmicensure}{\textbf{Output:}}
 \REQUIRE OCP problem definition, max iterations $K$
 \ENSURE  Parameters $w$
 \\ 
 \textit{Initialize}: $k=0$, $w \in R^{N + 1}$ uniformly, dataset $\mathcal{D} = \{\}$, 
  \WHILE {$| \mathcal{D}^k | > 0$ and $k < K$}
  \STATE Compute trajectory $\zeta^k$ from $w^k$ by solving the OCP.
  \STATE Derive corrections $\mathcal{D}^k = \{a^k_i, \ldots\}$ from $\zeta^k$ and requirements.
  \STATE Append $\mathcal{D}^k$ to $\mathcal{D}$.
  \STATE Compute $w_{k+1}$ by solving \eqref{eq:parameter_optimization}.
  \STATE $k = k + 1$
  \ENDWHILE
 \RETURN $w^k$
 \end{algorithmic}
 \end{algorithm}
We demonstrate this process in Section~\ref{sec:experiment}.
Note that each training iteration may easily be extended to compute and generate corrections for many trajectories, allowing one to optimize the parameters over a diverse set of OCPs.

\subsubsection*{Relationship to Existing OCP Parameter Optimization Methods}
To close this section, we compare our method with the large body of work on cost-function learning for optimal control.
Our work addresses short-comings of existing methods when applied in practice to robotic systems.

First, a common approach assumes one is provided with expert demonstrations \cite{levine2012continuous, englert2017inverse}.
These methods aim to find a cost function whose optimal policy reconstructs the demonstrations.
In many real-world cases, however, acquiring expert demonstrations is costly (e.g. autonomous vehicles) or unnatural (e.g. high-DOF systems).

Alternatively, preference-based methods aim to find a cost function whose optimal policy is consistent with relative preferences indicated by an expert \cite{sadigh2017active}.
Unfortunately, in many robotics settings, generating two similar trajectories is not realistic.
For example, in a driving domain, one cannot replay real-world interactions starting from the same state.


Finally, some methods rely on corrections of generated trajectories similar to that analyzed in this work \cite{jin2022learning, bajcsy2017learning}. 
This choice reduces the annotation workload for a human-in-the-loop. 
However, the method of \cite{bajcsy2017learning} requires real-time physical interaction between the human and the robot, which can be impractical for robotic systems.
As noted, our work builds most directly on \cite{jin2022learning} by allowing directional corrections to be applied to states, handling closed-loop trajectories, and re-framing the learning problem as finding the most-consistent parameters with a set of directional corrections.

\section{Experiments}\label{sec:experiment}


We illustrate the proposed analysis method on three numerical examples. Section~\ref{sec:experiments_open_loop} demonstrates the use of consistency scores for analyzing open-loop plans; Section~\ref{sec:examples_closed_loop} studies a closed-loop example; and Section~\ref{sec:examples_parameter_optimization} showcases a parameter optimization problem. In all three examples, we consider a nonlinear system with non-convex cost functions.
{While this paper presents numerical examples, we have validated the framework on large quantities of real-world data.}

\subsection{Experimental setup}


We consider a motion planning problem for a robot modeled as a unicycle \cite{lavalle2006planning}, with $x = \begin{pmatrix} X, Y, \theta, v, \omega, a, \alpha\end{pmatrix}^{\top}$
and $u = \begin{pmatrix}j, \eta\end{pmatrix}^{\top}$. The (positional) states $p_r=(X,Y)^\top$ correspond to the center of the robot, while $\theta$ is the yaw angle with respect to the $X$-axis. $v$ and $\omega$ are the linear and angular velocities, respectively, $a$ and $\alpha$ are the linear and angular accelerations, and $j$ and $\eta$ are the linear and angular jerks. 
The continuous-time dynamics are given by $\dot{x} = \begin{bmatrix} v \cos(\theta), v \sin(\theta), \omega, a, \alpha, j, \eta\end{bmatrix}^{\top} \coloneqq \tilde f(x,u)$. 
The continuous-time dynamics can be brought into the form of \eqref{eq:dynamics} using a (forward) Euler discretization, such that $x_{k+1} = x_k + \tilde f(x_k,u_k)\,dt$, where $dt = 0.1$ seconds is the sampling time. 
The planning horizon is $N=50$. We solve the resulting OCPs using IPOPT \cite{wachter2006implementation}.

The objective function used in the following examples consists of cost components described in Table \ref{tab:cost_component_weights}.
The costs on tangential jerk, angular jerk, and lateral acceleration components are quadratic penalties of the robot's state variables. 
The next four components -- reference speed, reference path, obstacle, and boundary -- represent the structure of the environment.
Reference speed component is penalized as the squared error from a desired speed.
Reference path component is penalized as the squared Euclidean distance between the robot's position and the closest projected point on the path.
The boundary component encourages the robot to stay within a lateral distance of the reference path. 
A quadratic penalty is applied if the Euclidean distance to the closest point on the path to the robot's position $p_r=(X,Y)^\top$, $P_{\text{ref}}(p_r)$, exceeds a desired distance to the path, $d_w$.
The obstacle component similarly penalizes penetration into a sphere of radius $o_\text{buffer}$ around an obstacle's position $p_o$.
The final two components represent interactions with other agents.
We specifically consider the existence of a lead agent ahead of our robot, moving along our reference path, for which we wish to maintain a certain following distance.
The headway component penalizes violation of a desired time headway, $t_h$, to the lead agent. 
Given a position $p$, the function $D(p)$ computes its distance  along the reference path. 
Accordingly, the robot incurs a penalty if the distance it travels in time $t_h$ at velocity $v$ is larger than the distance the robot is behind the lead agent, along the reference path.
Finally, the relative speed component encourages the robot to follow the lead agent at a similar speed.



\subsection{Analyzing Open-Loop Plans}\label{sec:experiments_open_loop}
We demonstrate the consistency framework described in Section~\ref{sec:consistency_analysis_application} on a path-following problem, where the robot is asked to track a given reference path at a desired speed while avoiding static obstacles; see Fig.~\ref{fig:core_capability}. 
For this section, we use the nominal weights for each of the cost components noted in Table \ref{tab:cost_component_weights}.
The headway and relative speed costs noted in Table~\ref{tab:cost_component_weights} are not active as there is no lead agent in the scene.

\begin{table}[!tb]
\centering
\medskip
\begin{threeparttable}
\caption{Cost component definitions}
\label{tab:cost_component_weights}
\begin{tabular}{lrr} \toprule
Cost Component          & $\ell^{(r)}$      & Weight \\ \midrule
Tangential jerk         & $0.5j^2$     & 0.1 \\
Angular jerk            & $0.5\eta^2$     & 0.1 \\
Lateral acceleration    & $0.5 (v \omega)^2$        & 1.0 \\
Reference speed         & $0.5 (v - v_{\text{ref}})^2$        & 10.0 \\
Reference path          & $0.5 \lVert P_{\text{ref}}(p_r) - p_r \rVert_2^2$        & 100.0 \\
Obstacle                & $\sum_o 0.5 (o_{\text{buffer}} - \lVert p_o - p_r \rVert_2))_+^2 $        & 1000.0 \\
Boundary                & $0.5 (\lVert P_{\text{ref}}(p_r) - p_r \rVert_2 - d_{w})_{+}^2$        & 1000.0 \\
Headway                 & $0.5 ((t_{h} v) - (D(p_o) - D(p_r)))_{+}^2$        & 1000.0 \\
Relative speed         & $0.5 (v - v_{o})^2$        & 1.0 \\ \bottomrule
\end{tabular}
\begin{tablenotes}
\item The $(x)_+^2$ notation indicates a one-sided quadratic where the penalty is applied only if $x > 0$.
\end{tablenotes}
\vspace{-2\baselineskip}
\end{threeparttable}
\end{table}

As an illustration, we construct an example with a curved path as shown in Fig.~\ref{fig:core_capability}.
This example simulates an issue {that may stem from an improperly sensed obstacle, or a sudden change to the robot's environment.}
The presence of this object, shown as a red X in the top subplot near coordinates $(40, 2.5)$, results in the orange trajectory (top subplot). 
We see from the center subplot that the robot slows down significantly from the desired speed of 10m/s.
While in this example it is simple to infer that the obstacle and its associated cost are problematic, the complexity of a real-world system can make that determination challenging.
For example, obstacles could be small and difficult to see in a visualization, or obscured by other contents of the scene, but are nonetheless detected by sensors and sent to a motion planner.

We now show how our method can be leveraged to identify the cause of the slowdown. To this end, we apply a directional correction at stage $10$ to indicate a desired increase in speed, i.e., $a_{x_{10}} = (0, 0, 0, 1, 0, 0, 0)$, with all other elements of $a$ set to zero. This is shown as the red arrow in the center subplot.
Fig.~\ref{fig:core_capability} (bottom right) shows the consistency score of the individual cost components, accumulated over the horizon, i.e., $\text{cs}^{(r)}=\sum_k \text{cs}_k^{(r)}$.
We see that the \texttt{OBSTACLE} component is \emph{least consistent} with the desired correction of increasing speed, while the \texttt{REFERENCE\_SPEED} component is \emph{most consistent} with the desired correction.
Intuitively, the obstacle cost pushes most strongly against a higher speed, while the reference speed cost most strongly supports it.
To verify the result, we remove the obstacle cost by zeroing the associated weights. 
The resulting, blue trajectory is shown in Fig.~\ref{fig:core_capability} (center). 
Specifically, we observe that by disabling the weight on the inconsistent obstacle cost, the robot's speed profile indeed moves in the direction indicated by the correction.


\subsubsection*{Comparison to Cost-Based Analysis}
A common approach to understanding a plan's behavior is to analyze the individual cost terms $\ell_k^{(r)}$ themselves, and attribute deficiency to cost components that take large values $\ell_k^{(r)}(x^\star, u^\star)$. We demonstrate next that such an approach may lead to the wrong conclusion:
Fig.~\ref{fig:core_capability} (bottom left) shows the objective function $J(x_{0:N}^\star, u_{0:N-1}^\star)$ broken down in the individual cost components at the stage of the correction, i.e., $w_{10}^{(r)} \ell_{10}^{(r)}(x_{10}^\star,u_{10}^\star)$. 
We see from the plot that \texttt{REFERENCE\_SPEED} and \texttt{REFERENCE\_PATH} are the components with highest value, due to the robot slowing down and deviating from the reference path to avoid the obstacle. 
Furthermore, the (problematic) obstacle cost component is zero at the correction stage as the robot is outside the obstacle buffer at $k=10$.
Note how, in contrast, our consistency-based analysis inherently accounts for cross-stage coupling through the sensitivity matrix \eqref{eq:sensitivity}, and correctly links the slowdown and correction at stage $10$ to the obstacle almost $40$ stages later. 
In addition, even if one observes the cumulative values of the cost components over all stages, $\sum_k w_k^{(r)} \ell_k^{(r)}(x_k^\star,u_k^\star)$, the \texttt{REFERENCE\_SPEED} component remains an order of magnitude larger than the \texttt{OBSTACLE} component.
Moreover, the cumulative value cannot communicate the impact of a component at a certain stage or whether that impact is consistent with the desired outcome.

To summarize, an analysis based on cost value alone can lead to inaccurate conclusions. 
In contrast, the proposed consistency-based analysis produces more accurate signals by leveraging first-order information and accounting for cross-stage effects using the system dynamics.
This is not surprising since the optimality condition is not defined with respect to the value of a cost function $J(\cdot)$, but rather its gradient $\nabla J(\cdot) = 0$ \cite[Section~5.5.3]{boyd2004convex}.

\begin{figure}[tb]
\centering
\includegraphics[width=\columnwidth]{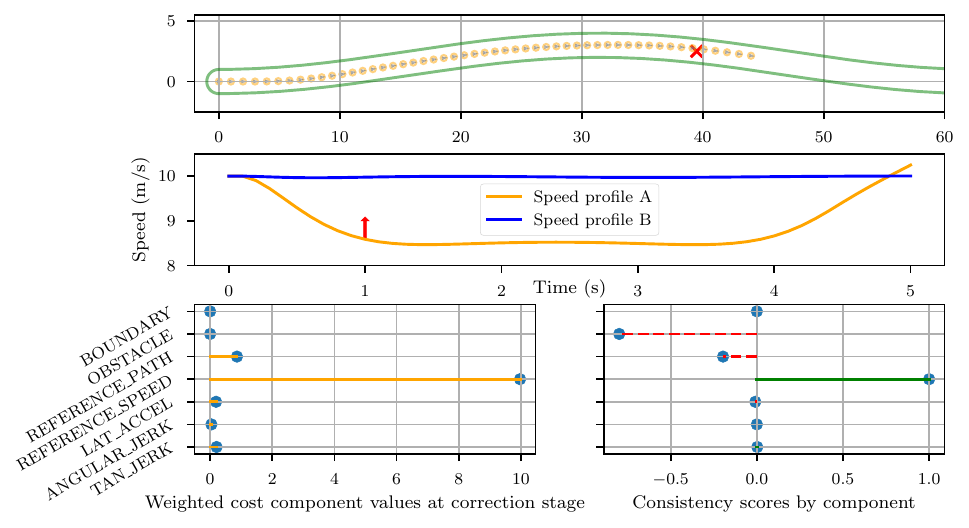}
\caption{Example of open-loop OCP corrections analysis. The top plot shows the open-loop evolution of the robot {within a corridor (green boundary)} with an improperly sensed obstacle near ({40}, 2.5). The center plot shows the speed profile of the robot when affected by the obstacle (orange) and the desired change to that speed profile. The blue profile is the result when the obstacle is removed. The bottom plot shows the computed costs and consistency scores of each component. The consistency analysis successfully detects the obstacle component as inconsistent (red dashed line) even though the magnitude of the obstacle cost is small.}
\vspace{-\baselineskip}
\label{fig:core_capability}
\end{figure}

\subsection{Analyzing Closed-Loop Trajectories}\label{sec:examples_closed_loop}

Next we show an application of the consistency framework to trajectories generated by an MPC scheme.
Assuming a duration of $T=30$ cycles, $T$ plans are computed which result in $T$ controls being executed and a trajectory with $T+1$ states. 
We augment our example above with a lead agent in the environment, such as another robot or a human, for which the controlled robot generates predictions.
The static obstacle from the previous example is removed.
At each planning time, a prediction of the other agent is generated and used in the OCP as a time-varying obstacle, to ensure an acceptable interaction.
In this example, we enable both the cost on headway, which penalizes the robot for being within a one-second time-headway of the lead agent, and the cost on relative speed, which encourages the robot to stay close to the lead agent's speed.

To demonstrate the practical use of this analysis, we simulate a noisy prediction for the lead agent.
We initialize the simulation with the robot at its desired speed and desired headway.
{In reality, the lead agent proceeds with a constant velocity equal to our robot's desired speed.}
That means there is no reason for the robot to respond to the agent to satisfy the desired headway if the predictions were accurate.
However, for a few planning frames $t = \{10, \ldots, 19\}$ we generate a prediction for the lead agent that incorrectly decelerates with $-1.0\, m/s^2$.
With this setup, as one might expect, we observe a modest deceleration in the closed-loop trajectory around the frames with inaccurate predictions.
The scenario is depicted in Fig.~\ref{fig:cl_slow_down}.

\begin{figure}[tb]
\centering
\includegraphics[width=\columnwidth]{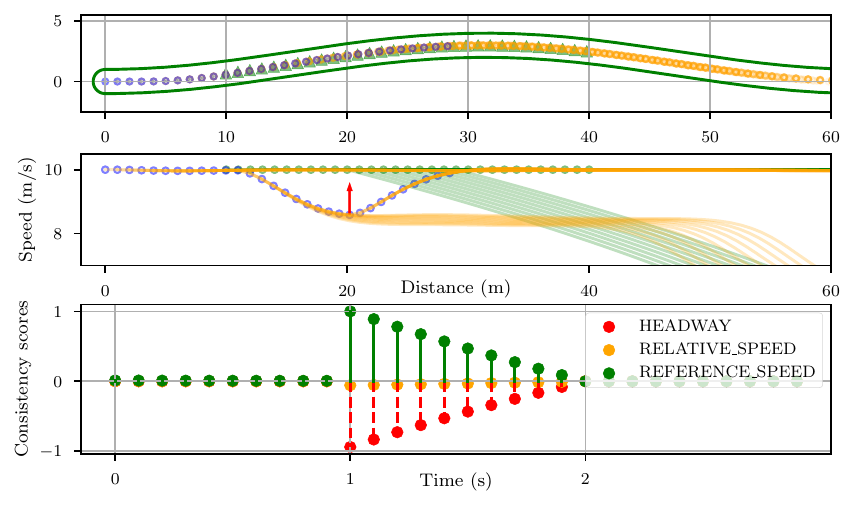}
\caption{Example of closed-loop OCP corrections analysis. The top plot shows the closed-loop evolution of the robot (blue circles) and lead agent (green triangles) {within a reference corridor (green boundary).} The orange trajectories indicate the computed -- but not executed -- plans of the robot.
The center plot shows the speed profile of the robot and its response to an incorrect decelerating prediction of the lead agent. 
The bottom plot shows the consistency scores of three components of particular interest. The consistency analysis detects the headway and relative speed components as inconsistent at all times for which there is an incorrect prediction.}
\label{fig:cl_slow_down}
\vspace{-\baselineskip}
\end{figure}

Suppose we are interested in understanding the cause of the slowdown, given the closed-loop executed trajectories by both our robot and the lead agent.
In this case, the cause is not obvious as the lead agent does not slow down, but only its predictions.
This setup simulates situations in practice where, when aiming to understand the cause of undesired robot behavior, one is not usually told a priori that a prediction was incorrect.
Using our framework, we apply a correction at $t=21$ to increase speed, since it is the time with maximum error relative to the desired speed of $10\, m/s$.

The bottom subplot in Fig.~\ref{fig:cl_slow_down} shows the consistency scores $cs_t^{(r)}$ of three cost components of interest, for each the planning cycle.
Each time point in this plot shows the sum of the consistency scores of a cost component over the plan at that cycle.
In this example, the analysis indicates that headway is the predominant component causing the braking behavior.
Our analysis framework detects the headway component as problematic at the first planning time for which there are incorrect predictions ($t=10$), and it remains detected through the final time of incorrect prediction ($t=19$).
As before, the respective costs are not indicative of the relative impact. Indeed,
the headway cost has a very small magnitude for all planning cycles that have a decelerating prediction ($< 0.001$), while the values of the relative speed and reference speed components remains one and two orders of magnitude larger.
In other words, large cost component values do not necessarily indicate that the corresponding components caused the problem.

{
This example illustrates a scenario where the robot's undesired behavior is caused by poor predictions. In our experience, we have found the proposed method particularly useful in cases that are not easily understood by visual inspection.
We point out that our framework would have also led to similar conclusions had the lead agent indeed decelerated.}

\subsection{OCP Parameter Optimization}\label{sec:examples_parameter_optimization}
This section demonstrates the parameter optimization capability described in Section~\ref{sec:parameter_optimization} in both open- and closed-loop.

\subsubsection{Open-Loop Parameter Optimization}
First, we show results on an open-loop problem that aims to find the parameters of the OCP that result in the optimal plan having desired behavior.
For this experiment, we use the same scenario as in Section \ref{sec:experiments_open_loop}, in which the robot must follow a reference path at a desired speed while staying away from the boundaries. 
In this case, we initialize the weights $w \in \R^{(N + 1)R}$ uniformly with planning horizon of $N = 50$, and $R = 6$ cost components.
We assume that no other agents or obstacles exist.

{We apply the procedure described in Algorithm~\ref{alg:parameter_optimization}} to find the optimal weights and automatically construct corrections on each trajectory based on requirements to stay within $0.25m/s$ of the reference speed and $0.25m$ of the reference path.
{The initial speed is $8m/s$} and we require the robot to reach $10 \pm 0.25 m/s$ by the final $35$ stages of the plan. The reference path corrections are applied over the entire plan.
We generate a single trajectory from the OCP with the latest weights on each iteration.
The parameter update step \eqref{eq:parameter_optimization} is modeled in CVXPY \cite{diamond2016cvxpy} and solved by the Clarabel solver \cite{Clarabel_2024} with a margin $m=0.001$.
In this example, we set the headway and obstacle component weights to be zero as these terms have no effect.

{The while-loop of Algorithm~\ref{alg:parameter_optimization}} is run for 22 iterations after which the solution of OCP is a plan with no corrections. Fig.~\ref{fig:ol_training_history} shows the learned weights and the errors over iterations. 
As expected, the weights on the boundary, reference path and reference speed are increased (left subplot) to keep the robot near the path and reach the desired speed.

\begin{figure}[tb]
\centering
\includegraphics[width=\columnwidth]{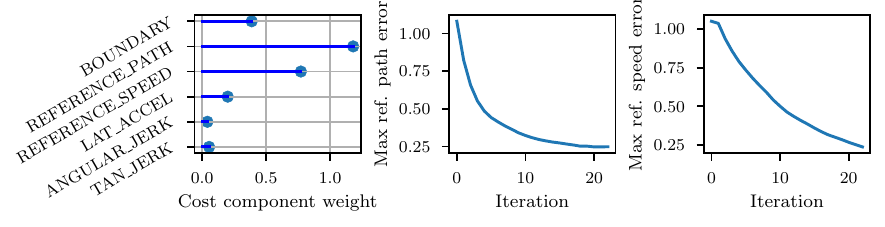}
\caption{Illustration of the open-loop training process. 
The left plot shows optimal weights for each cost component aggregated over all stages.
The right two plots show the improvement in reference path and reference speed error through the training process.}
\vspace{-\baselineskip}
\label{fig:ol_training_history}
\end{figure}

\subsubsection{Closed-Loop Parameter Optimization}
Next, we apply parameter optimization to closed-loop tuning.
In this case, the goal is to find parameters such that the closed-loop trajectory is optimal, rather than a single open-loop plan.
To showcase this difference, we construct a scenario similar to the one in Fig.~\ref{fig:cl_slow_down}; however, we delay the lead agent's deceleration to the final 40 stages of the prediction, leaving one second of correct prediction. This simulates the case in practice where short-horizon predictions are often more accurate than long-term predictions.
The goal in this scenario is to maintain a constant headway behind the lead agent in closed-loop, despite a faulty long-term prediction.
Specifically, a correction is applied to the closed-loop trajectory at any time where the true headway (not the predicted one) differs by more than $\pm 0.1m$.
Intuitively, this example shows how one may optimize parameters to achieve the desired closed-loop behavior in the presence of noisy data inputs (in this case, agent predictions) at each open-loop planning cycle {and reduce sensitivity to environmental changes toward the end of the planning horizon.}

We initialize the robot with a speed of $10$\,m/s and the desired headway of 1 second behind the lead agent. 
Despite the incorrect decelerating predictions, the lead agent proceeds at a constant speed of $10m/s$.
Unlike the previous examples, however, we reduce the reference speed to $8m/s$.
This change to reference speed makes the task more challenging as one needs to find a trade-off between the headway, relative speed, and reference speed terms that result in a nearly constant headway in closed-loop even though none of the terms alone will produce the desired outcome.

Here, we expect to find parameters that discount the effect of later predictions in order to maintain the desired closed-loop behavior.
Figure~\ref{fig:cl_training_history} shows the result over 30 training iterations.
The {left and center} plots show that the learning process converged to a solution that is very close to a constant speed as desired.
The {right} plot shows the weights resulting from the optimization.
Note that both of the prediction-dependent components have a strong influence at the beginning of the plan.
In particular, the relative speed component must use the correct part of the prediction (1s) in order for the robot to stay at the same speed as the lead agent.
The optimal weights on both headway and relative speed quickly become small after stage 10, where the incorrect prediction begins.
At convergence, the closed-loop headway distance error is less than 0.15 meters.

\begin{figure}[tb]
\centering
\includegraphics[width=\columnwidth]{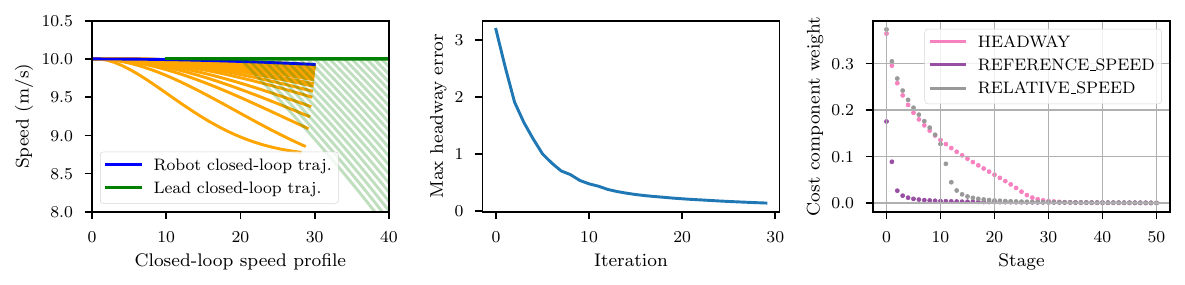}
\caption{Illustration of the closed-loop training process. The left plot shows the optimal trajectory at each training iteration (orange), along with the predicted trajectory of the lead agent (light green), and the final closed-loop trajectories for the robot (blue) and lead agent (dark green). The center plot shows the error in desired headway of the closed-loop trajectory. This error trends to zero as the robot finds a parameter configuration that avoids deceleration. The right plot shows the optimized parameter values by stage for the three cost components of most relevance.}
\vspace{-1\baselineskip}
\label{fig:cl_training_history}
\end{figure}

\section{Conclusion}
In this paper, we presented a method to understand how cost components impact the solution of optimal control problems (OCPs).
Our analysis relies on a directional correction indicating the desired change in the optimal plan produced by the OCP.
First, we provided a theoretical justification for the use of corrections in OCP analysis.
Next, we demonstrated the utility of our method to understand the cause of undesired behavior in both open- and closed-loop scenarios.
In addition, for a typical linear parametrization of the cost function, we described a method to optimize the cost component weights to make the OCP consistent with a set of directional corrections.


\section*{Acknowledgment}

The authors would like to thank Chung Yen Lin and Royce Cheng-Yue for their continued support, and Joao Moreira de Sousa Pinto for insightful discussions.



\bibliographystyle{IEEEtran}
\bibliography{IEEEabrv,references}

\end{document}